\documentclass[letterpaper, 10 pt, conference]{ieeeconf}

\IEEEoverridecommandlockouts
\usepackage{cite}
\usepackage{amsmath,amssymb,amsfonts}
\usepackage{algorithmic}
\usepackage{graphicx}
\usepackage{textcomp}
\usepackage{xcolor}
\usepackage{url}

\makeatletter
\@ifundefined{proof}{}{}
\@ifundefined{endproof}{}{}
\makeatother
\usepackage{amsthm}
\usepackage{overpic}
\usepackage{lipsum}
\makeatletter
\@ifundefined{labelindent}{}{\let\labelindent\relax}
\makeatother
\usepackage{enumitem}
\usepackage{cases}
\usepackage{subcaption}
\usepackage{graphicx}
\usepackage{ulem}
\usepackage{xcolor}
\usepackage{cite}
\usepackage{cancel}

\usepackage{algorithm,algorithmic}
\usepackage{mathbbol}
\usepackage{comment}

{
    \theoremstyle{plain}
    
}

\DeclareCaptionLabelFormat{lc}{\MakeLowercase{#1}~#2}

\newtheorem{proposition}{Proposition}

\def\BibTeX{{\rm B\kern-.05em{\sc i\kern-.025em b}\kern-.08em
    T\kern-.1667em\lower.7ex\hbox{E}\kern-.125emX}}

\renewcommand{\baselinestretch}{0.964}

\title{\LARGE \bf A Mean-Field Approach for Safe Routing of Multi-Destination Urban Air Mobility Networks
}

\author{Nameer Fawwaz Ahmed, 
Cody Fleming, and
Yasser Shoukry
\thanks{ 
This work was supported in part by  NSF Grant 2313104.
}
\thanks{
Nameer Fawwaz Ahmed and Yasser Shoukry are with the Department of Electrical Engineering and Computer Science at the University of California, Irvine, CA 92697, USA (e-mail: ahmednf@uci.edu, yshoukry@uci.edu).}
\thanks{Cody Fleming is with the Mechanical Engineering
Department, Iowa State University, IA 50011, USA (e-mail: flemingc@iastate.edu).}
}

\begin{document}

\maketitle

\begin{abstract}
As Urban Air Mobility (UAM) systems scale toward high-density operations, managing autonomous Unmanned Aerial Vehicle (UAV) traffic requires control frameworks that are both tractable and safety-critical. This paper presents a principled optimal control-theoretic foundation for routing in multi-destination UAM networks subject to vertiport capacity and flow constraints. We first model the network as a destination-conditioned Continuous-Time Markov Chain (CTMC) to capture the stochastic transitions between queueing, service, and flight states. To ensure tractability, we employ a mean-field fluid approximation and derive the underlying system dynamics as a set of coupled ordinary differential equations. A key contribution of this work is the formal proof of the positive invariance of the queue-free state space. We demonstrate that under specific underloaded conditions, a system initialized without queues will remain queue-free indefinitely. This result allows us to transform a complex, infinite-dimensional continuous-time optimal control problem into a tractable, finite-dimensional algebraic optimization. The resulting framework jointly optimizes for travel time and multi-hop efficiency while ensuring network-wide stability. We validate the approach by characterizing the steady-state flow equilibria and providing sufficient conditions for safe, congestion-free operation in large-scale mobility systems.

\end{abstract}



\section{Introduction}
\label{sec:Introduction}
The rapid advancement of electric vertical take-off and landing (eVTOL) technology has positioned Urban Air Mobility (UAM) as a promising solution to urban congestion, but realizing high-throughput, autonomous UAM service requires a control-theoretic framework that can manage dense, stochastic traffic flows within constrained vertiport infrastructure. Prior work has approached this problem through decentralized air traffic management and discrete path-finding formulations such as the Vehicle Routing Problem~\cite{bharadwaj2021decentralized,thibbotuwawa2020unmanned}, and through queueing-theoretic vertiport capacity models~\cite{zhang2022method,han2026rolling}. These approaches, however, either rely on deterministic flight times and static graphs that do not scale to high-density operations, or lack a unified, destination-conditioned routing policy that simultaneously captures microscopic stochastic queueing and provable macroscopic stability. Mean-field techniques have addressed analogous scalability challenges in other mobility sectors~\cite{zheng2021optimal}, but have not yet been extended to provide safety-critical routing guarantees for UAM networks.

This paper closes that gap by developing a mean-field fluid framework for destination-conditioned UAM routing that replaces per-agent stochastic dynamics with a population-level, provably safe description of the network. We model the network as a destination-conditioned Continuous-Time Markov Chain and derive its fluid limit as a set of coupled ordinary differential equations, reducing the state-space dimension from combinatorial in fleet size to linear in the number of vertiports. Our central theoretical contribution is a formal proof that the queue-free state space is positively invariant: under mild underloaded conditions, a network initialized without queues remains queue-free for all time. This result lets us replace the original infinite-dimensional optimal control problem with a tractable, finite-dimensional algebraic optimization over steady-state flow equilibria whose complexity depends only on the number of vertiports rather than the number of UAVs. We validate the resulting framework numerically, characterizing the trade-off between travel efficiency and destination reachability and identifying feasibility and saturation thresholds on vertiport capacity that support principled infrastructure sizing.


\section{Problem Formulation}
\label{sec:Problem Formulation}

\subsection{Notation}

The sets of non-negative integers (including zero) and real numbers are denoted by $\mathbb{N}$ and $\mathbb{R}$, respectively. For a set $\mathcal{S}$, its cardinality is denoted by $|\mathcal{S}|$. A network is defined by a directed graph $\mathcal{G} = (\mathcal{V}, \mathcal{E})$, where $\mathcal{V}$ is the set of vertices (vertiports) and $\mathcal{E}$ is the set of edges (flight links). Matrices are denoted by uppercase letters and their elements by corresponding lowercase letters; for example, $p_{ij}^{(d)}$ represents the $(i,j)$-th entry of the destination-conditioned routing matrix $P^{(d)}$.

The operators $\mathbb{P} (\cdot)$ and $\mathbb{E}[\cdot]$ denote probability and expectation, respectively. The symbol $\mathbf{1}_{[\cdot]}$ represents the indicator function, evaluating to 1 if the condition is true and 0 otherwise (e.g., $\mathbf{1}_{[j \neq d]}$). Conversely, $\mathbf{1}$ denotes a column vector of all ones of appropriate dimension, such that the inner product $\mathbf{1}^\top x$ represents the sum of the components of a column vector $x$.

\subsection{Continuous-Time Markov Chain Modeling of UAM Networks}
We define the UAM network as a directed graph $\mathcal{G}=(\mathcal{V}, \mathcal{E})$ consisting of $V$ vertiports and $E$ links \cite{vascik2019aiaa}. To capture the stochastic nature of the system, we model the network as a \textit{Continuous-Time Markov Chain} (CTMC) where each UAV $n \in \{1, \dots, N\}$ is in one of three states: waiting in a queue, in service on a landing pad, or in flight \cite{vascik2019aiaa}.

Let $d \in \mathcal{V}$ denote the destination of a vehicle. The state of the network at time $t$ is represented by the integer-valued random vector $X(t) = (X^{(1)}(t), \ldots, X^{V}(t))$ such that $X^{(d)}(t)= (Q_i^{(d)}(t), S_i^{(d)}(t), F_{ij}^{(d)}(t))$, where:
\begin{itemize}
    \item $Q_i^{(d)}(t) \in \mathbb{N}$: Number of UAVs  waiting for a pad to be available at vertiport $i$ to be serviced in-route toward destination $d$.
    \item $S_i^{(d)}(t) \in \{ 0, 1, ..., c \} $: Number of UAVs in service on pads at vertiport $i$ with destination $d$.
    \item $F_{ij}^{(d)}(t) \in \mathbb{N} $: Number of UAVs in flight from $i$ to $j$ with final destination set to $d$.
\end{itemize}
The total fleet size $N \in \mathbb{N} $ remains fixed such that $\sum_{d,i} Q_i^{(d)} + \sum_{d,i} S_i^{(d)} + \sum_{d,i,j} F_{ij}^{(d)} = N$.

To simplify the description of capacity-constrained dynamics, we denote the \textbf{aggregate state variables} at vertiport $i$ as:
\begin{align}
    Q_i(t) = \sum_{d \in \mathcal{V}} Q_i^{(d)}(t), \\
    \quad S_i(t) = \sum_{d \in \mathcal{V}} S_i^{(d)}(t), \\
    F_{ij} (t) = \sum_{d \in V} F_{ij}^{(d)} (t)
\end{align}

\paragraph{Service Duration}

Let $\mu_s$ denote the mean service rate per pad. The actual time a UAV spends at a service pad is modeled as a random variable $\tau_s \sim \text{Exp}(\mu_s)$ with mean $\mathbb{E}[\tau_s] = 1/\mu_s$. The probability density function (PDF) is defined as:

\begin{equation}
    \mathbb{P} (\tau_s = t; \mu_s) = \mu_s e^{-\mu_s t}, 
    \qquad
    t \geq 0
    \label{eq:service_pdf}
\end{equation}

\paragraph{Flight Duration}

Let $\mu_{ij}$ denote the mean flight rate on the link $i \to j$ where $i \neq j$. The time spent in flight, $\tau_{ij}$, follows an exponential distribution with rate $\mu_{ij}$, such that $\tau_{ij} \sim \text{Exp}(\mu_{ij})$ and $\mathbb{E}[\tau_{ij}] = 1/\mu_{ij}$. The probability density function (PDF) is defined as:

\begin{equation}
    \mathbb{P} (\tau_{ij} = t; \mu_{ij}) = \mu_{ij} e^{-\mu_{ij} t},
    \qquad
    i \neq j,
    \qquad
    t \geq 0
    \label{eq:flight_pdf}
\end{equation}


Due to the memorylessness of the exponential service and flight time distributions, the random vector process $\{X(t)\}_{t \ge 0}$ constitutes a continuous-time Markov chain (CTMC) \cite{ross2019probability}. A defining characteristic of a CTMC is the Markov property, which dictates that the conditional probability of the future state depends only on the present state and is independent of the past history. Formally, for all times $s, t \ge 0$ and any valid network states $x, y, x(u) \in S$ for $0 \le u < s$:
\begin{align}
& \mathbb{P} (X(s+t) = y \mid X(s) = x, X(u) = x(u) \text{ for } 0 \le u < s) \nonumber \\
&\quad = \mathbb{P} (X(s+t) = y \mid X(s) = x) = \mathbb{P}_{x,y}(t)
\end{align}
This property entails the one does not have to store a history of how long each UAV spends in service or in flight.

However, the state space $S$ grows combinatorially as $|S|=\binom{N+K-1}{K-1}$, where $K = V^2 + V$, making exact computation via Kolmogorov Forward Equations \cite{kolmogoroff1931ueber} computationally intractable for large-scale systems \cite{ethier2009markov, harchol2013performance}.

\subsection{Mean-Field Fluid Approximation of UAM Networks}
To ensure computational tractability as the fleet size $N$ grows, we employ a first-order mean-field fluid approximation. We define the deterministic fluid state variables as the expectations of the underlying stochastic counts: $q_i^{(d)}(t) \approx \mathbb{E}[Q_i^{(d)}(t)]$, $s_i^{(d)}(t) \approx \mathbb{E}[S_i^{(d)}(t)]$, and $f_{ij}^{(d)}(t) \approx \mathbb{E}[F_{ij}^{(d)}(t)]$. We use the normalized counts. We have $ q_i^{(d)}, f_{ij}^{(d)} \in [0,1] $, and $ s_i^{(d)} \in [0, \kappa_i] $ where $ \kappa_i > 0 $.

As $N \to \infty$, the trajectory of the system converges to these expected values according to the {\it{Functional Law of Large Numbers}} (FLLN)~\cite{kurtz1970solutions}. The expectations are taken over the state space $\mathcal{S}$ of the Continuous-Time Markov Chain (CTMC):

\begin{align}
&\mathbb{E}[Q_i^{(d)}(t)] = \sum_{x \in S} Q_i^{(d)}(t) \mathbb{P} (X^{(d)}(t) = x^{(d)}) \\
&\mathbb{E}[S_i^{(d)}(t)] = \sum_{x \in S} S_i^{(d)}(t) \mathbb{P} (X^{(d)}(t) = x^{(d)}) \\
&\mathbb{E}[F_{ij}^{(d)}(t)] = \sum_{x \in S} F_{ij}^{(d)}(t) \mathbb{P} (X^{(d)}(t) = x^{(d)})
\end{align}

In the fluid regime, the discrete stochastic state $X(t)$ is approximated by a continuous deterministic trajectory $x(t) \approx \mathbb{E}[X(t)]$, representing the component-wise expectation of the network's variables across all destinations $d \in \mathcal{V}$.

The destination-conditioned dynamics for the system are governed by the following set of ordinary differential equations (ODEs)~\cite{weiss2021scheduling}:
\begin{equation}
\begin{cases}
\dot{q}_j^{(d)} = a_j^{(d)} - r_j^{(d)} \\
\dot{s}_j^{(d)} = r_j^{(d)} - \mu_s s_j^{(d)} \\
\dot{f}_{ij}^{(d)} = p_{ij}^{(d)} \mu_s s_i^{(d)} - \mu_{ij} f_{ij}^{(d)}
\label{eq: Dynamical System}
\end{cases}
\end{equation}
where the components are defined as follows:
\begin{itemize}
    \item \textbf{Arrival Rate ($a_j^{(d)} \in \mathbb{R}$):} The aggregate flow of UAVs with destination $d$ arriving at vertiport $j$ from all incoming links, defined by ~\cite{weiss2021scheduling} as:
    \begin{equation}
    a_j^{(d)}(t) := \sum_{i \neq j} \mu_{ij} f_{ij}^{(d)}(t)
    \label{eq: Arrival Rate Definition}
    \end{equation}
    \item \textbf{Service Start Rate ($r_j(t) \in \mathbb{R} $):} The rate at which UAVs with destination $d$ enter service at vertiport $j$, limited by the available service capacity $\kappa_j$~\cite{weiss2021scheduling}. \\
    
    The service start rate depends on whether a queue is present. Specifically:
    \begin{equation}
    r_j(t) =
    \begin{cases}
    \mu_s \kappa_j, & \text{if } q_j(t) > 0, \\[6pt]
    \min \{ a_j(t), \mu_s \kappa_j \}, & \text{if } q_j(t) = 0.
    \end{cases}
    \end{equation}
    
    This can be written more compactly as:
    \begin{equation}
    r_j(t) = \min \{ a_j(t), \mu_s \kappa_j \}.
    \label{eq: Service Start Rate}
    \end{equation}
    
    We define $\{ r_j^{(d)}(t) \}_{d=1}^V$ as the class-specific service start rates satisfying
    \begin{equation}
    \sum_d r_j^{(d)}(t) = r_j(t).
    \end{equation}
    
    Due to the nonlinearity of the minimum operator, $r_j^{(d)}(t)$ does not admit a simple closed-form expression in terms of $a_j^{(d)}(t)$.
    
    \item \textbf{Routing ($p_{ij}^{(d)}$):} The routing matrix entry $p_{ij}^{(d)}$ represents the probability that a UAV heading to destination $d$ will head to vertiport $j$ for their next hop, given they are at vertiport $i$~\cite{weiss2021scheduling}. The routing matrix satisfies the following conditions:
    \begin{equation}
    p_{ij}^{(d)} \geq 0, \quad
    \sum_j p_{ij}^{(d)} = 1,\, \forall i, \quad
    p_{ii}^{(d)} = 0,\, \forall i
    \label{eq:Routing Matrix Definition}
    \end{equation}
    
\end{itemize}
This fluid regime allows for the analysis of steady-state flow equilibria and network stability without the dimensionality constraints of the underlying CTMC~\cite{weiss2021scheduling}.

While the fluid approximation is formally justified in the $N \to \infty$ limit~\cite{kurtz1970solutions}, any real UAM deployment operates with a finite fleet size $N$; near-term deployments are expected to involve fleet sizes on the order of hundreds of UAVs within a single metropolitan area~\cite{nasa}. For finite $N$, the discrepancy between the stochastic trajectory $X(t)/N$ and its fluid limit $x(t)$ is known to concentrate around zero with fluctuations of order $O(\frac{1}{\sqrt{N}})$~\cite{kurtz1970solutions}. The fluid trajectory $x(t)$ therefore provides a high-probability, uniformly accurate surrogate for $X(t)/N$, with the approximation error shrinking as the fleet scales, which justifies treating the ODE system~\ref{eq: Dynamical System} as the operative model for routing design.

\subsection{Optimal Routing Formulation}
This section defines the continuous-time average-cost optimal control problem used to determine the optimal routing policy.

\subsubsection{Decision Variables}
The primary goal is to find the optimal destination-conditioned routing policy set $\mathcal{P}^* = \{P^{(1)}, P^{(2)}, \dots, P^{(d)}\}$, where $P^{(1)}, P^{(2)},...,P^{(d)} \in \mathbb{R}^{N \times N}$ for each destination class $d \in \mathcal{V}$. The individual decision variables are the transition probabilities $p_{ij}^{(d)}$, which govern the movement of UAVs from vertiport $i$ to $j$ for a specific destination $d$.

\subsubsection{Objective Function}
The optimization seeks to minimize the long-run average cost per each hop $i\to j$. In particular, let:
\begin{equation}
\beta_{ij}^{(d)} := \tau_{ij} + h \cdot \mathbf{1}_{[j \neq d]} \label{eqref: Cost function}
\end{equation}
be the cost incurred when making hop $ i \to j $. Here, $\tau_{ij} = \frac{\bar{v}}{l_{ij}}$ is the average time to go from $i \to j$ (defined as the average velocity $\bar{v}$ divided by the route length $l_{ij}$ between $i$ and $j$) and $h \cdot \mathbf{1}[j \neq d]$ is the cost of visiting hops $j$ that are not equal to the final destination $d$.

For a destination class \(d\), let \(Y_n^{(d)}\) denote the chain of hops between the vertiports induced by the routing matrix $P^{(d)}$. Assume the chain is ergodic 
then, over the first \(N\) hops between vertiports, the total accumulated cost is:
\[
C_N^{(d)} = \sum_{n=0}^{N-1} \beta_{Y_n^{(d)},Y_{n+1}^{(d)}}.
\]
Hence, we define the objective function as the the long-run average cost per jump:
\begin{align}
J = \sum_d m_d J^{(d)}, \quad
   J^{(d)} = \lim_{N\to\infty}
\frac{1}{N}\,\mathbb{E}\! \left[C_N^{(d)}\right], 
\label{eq:obj_function}
\end{align}
where $m_d$ denotes mass of the UAV flow with a destination class \(d\).


\subsubsection{System Constraints}
For the routing policy to be valid, the following constraints must be satisfied for all $i, d \in \mathcal{V}$ and all $t \ge 0$:
\begin{itemize}
    \item \textbf{Network Dynamics:} The system must adhere to the destination-conditioned queue, service, and flight dynamics:
    \begin{subequations}
    \begin{align}
        \dot{q}_{i}^{(d)}(t) &= \sum_{k \neq i} \mu_{ki} f_{ki}^{(d)}(t) - r_{i}^{(d)}(t) \\
        \dot{s}_i^{(d)}(t) &= r_i^{(d)}(t) - \mu_s s_i^{(d)}(t) \\
        \dot{f}_{ij}^{(d)}(t) &= \mu_s p_{ij}^{(d)}(t) s_i^{(d)}(t) - \mu_{ij} f_{ij}^{(d)}(t)
    \end{align}
    \end{subequations}
    
    \item \textbf{Routing Integrity:} The decision variables must satisfy $\sum_{j \in \mathcal{V}} p_{ij}^{(d)} = 1$, $p_{ij}^{(d)} \ge 0$, and $p_{ii}^{(d)} = 0$.
    
    \item \textbf{Mass Conservation:} The total mass for each destination class $m_d$ is fixed:
    \begin{equation}
        \sum_{i \in \mathcal{V}} \left(q_i^{(d)}(t) + s_i^{(d)}(t)\right) + \sum_{i,j \in \mathcal{V}} f_{ij}^{(d)}(t) = m_d. 
        \label{eqref: Mass conservation w/ Q}
    \end{equation}

    \item \textbf{Safety Constraint:} To ensure safe operation and prevent congestion, we enforce $q_i^{(d)}(t) = 0$ for all $t \ge 0$.
\end{itemize}

Solving this optimization problem is computationally intractable due to the infinite-dimensional constraints imposed by the continuous-time dynamics and the non-linear coupling between the routing variables and fleet distribution. The requirement for a zero-queue state at all times further restricts the feasible set, necessitating a rigorous analysis of transient behavior of the network to derive tractable control laws for large-scale UAM systems.


\section{Queue-Free Analysis and Stability}
\label{sec:Queue-Free Analysis and Stability}
This section analyzes the conditions under which the UAM network operates in an underloaded regime, characterized by the elimination of aerial queues. Our analysis follows a two-step approach. First, we establish the sufficient conditions for queue-free operation. Next, we show that the system dynamics are positively invariant with respect to the queue-free state; therefore, if the network is initialized without queues and satisfies these queue-free conditions, it remains queue-free indefinitely.

\subsection{Sufficient condition for Queue-free States}

To ensure safe operation, we need to maintain queue-free states for our system. This subsection establishes the sufficient conditions for queue-free conditions. 

\begin{proposition}
\label{prop:queue_free}
Suppose that:
\begin{equation}
a_i(t) < \mu_s \kappa_i, \qquad \forall i \in \mathcal V,\ \forall t \ge 0,
\label{eq:Q-free-suff-condition}
\end{equation}
where:
\[
a_i(t)=\sum_d a_i^{(d)}(t).
\]
Then:

\begin{enumerate}
\item if $q_i(0)=0$, then $q_i(t)=0$ for all $t\ge 0$;
\item if $q_i(0)>0$, then $q_i(t)$ decreases whenever $q_i(t)>0$, and hence the queue is eventually drained to zero, after which it remains zero.
\end{enumerate}
\end{proposition}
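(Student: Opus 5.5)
The plan is to work with the aggregate queue at vertiport $i$, obtained by summing the destination-conditioned queue equations over $d$. Since $\sum_d a_i^{(d)} = a_i$ and $\sum_d r_i^{(d)} = r_i$, this gives
\begin{equation*}
\dot q_i(t) = a_i(t) - r_i(t),
\end{equation*}
with $r_i$ given piecewise by the service start rate definition: $r_i = \mu_s\kappa_i$ when $q_i>0$, and $r_i=\min\{a_i,\mu_s\kappa_i\}$ when $q_i=0$. The whole argument then reduces to a scalar comparison. The hypothesis \eqref{eq:Q-free-suff-condition} fixes the sign of $a_i - r_i$ in each of the two regimes.

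For part 1, consider any time at which $q_i(t)=0$. Because $a_i(t)<\mu_s\kappa_i$, we get $r_i(t)=\min\{a_i(t),\mu_s\kappa_i\}=a_i(t)$, and hence $\dot q_i(t)=0$. To make this rigorous I would argue by contradiction. Suppose $q_i$ becomes positive after starting at zero. Let $t_0=\sup\{t: q_i(s)=0 \text{ for all } s\in[0,t]\}$. Then there is an interval $(t_0,t_1)$ on which $q_i>0$ at some points. At every such point, $\dot q_i = a_i - \mu_s\kappa_i <0$. So on any maximal subinterval where $q_i>0$, the function $q_i$ is strictly decreasing. However, that subinterval's left endpoint has $q_i=0$ by continuity, so $q_i$ cannot become positive there, which is a contradiction. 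The same conclusion follows from a one-sided comparison (Nagumo-type) argument applied to the nonnegative quantity $q_i$.

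For part 2, whenever $q_i(t)>0$ we have $\dot q_i(t)=a_i(t)-\mu_s\kappa_i<0$, so $q_i$ is strictly decreasing. To get draining in finite time rather than mere monotone decrease, I would require a uniform gap. If \eqref{eq:Q-free-suff-condition} holds with $\sup_t a_i(t)\le \mu_s\kappa_i-\epsilon$ for some $\epsilon>0$ (or simply $a_i\le \mu_s\kappa_i-\epsilon$), then $q_i(t)\le q_i(0)-\epsilon t$ as long as $q_i>0$. This means $q_i$ reaches zero by time $q_i(0)/\epsilon$, after which part 1, applied from that time onward, keeps it at zero. Without a uniform gap, I would still conclude monotone convergence. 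I would also note that by mass conservation $a_i$ is bounded, so if the strict inequality is interpreted for the limiting values, a gap is available.

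The main obstacle is the well-posedness and regularity of the piecewise-defined $r_i$. The right-hand side is discontinuous at $q_i=0$, so $\dot q_i$ should be understood in a Filippov or absolutely continuous sense. The contradiction argument must therefore use integral forms, namely $q_i(t)-q_i(t_0)=\int_{t_0}^t (a_i-r_i)\,ds$, rather than pointwise derivatives. I would handle this by working with the integral equation and the sign of the integrand on the set where $q_i>0$.
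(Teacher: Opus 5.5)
Your proof follows the paper's: you pass to the aggregate equation $\dot q_i=a_i-r_i$ and read off the sign of $a_i-r_i$ in the two regimes of the piecewise service rate. The paper argues pointwise (if $q_i(t)=0$ then $\dot q_i(t)=0$, hence the queue stays at zero), which is only heuristic at the discontinuity of $r_i$. Your excursion argument is the rigorous version of part 1: every open interval on which $q_i>0$ has a left endpoint where $q_i=0$, while $\dot q_i<0$ a.e. on that interval.

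You are right not to derive part 2 from strict decrease alone, and this exposes a real flaw in the paper. The paper's step ``the queue decreases until it reaches zero'' is a non sequitur, and the claim fails under the stated hypothesis. A counterexample:
\begin{itemize}
\item take two vertiports, so that $p^{(d)}_{12}=p^{(d)}_{21}=1$;
\item set $\mu_s=\mu_{12}=\mu_{21}=1$, $\kappa_1=0.1$, $\kappa_2=1$;
\item start from $q_1(0)=0.9$, $s_1(0)=0.1$, with all other components zero.
\end{itemize}
While $q_1>0$ we have $r_1=0.1$, so $s_1\equiv 0.1$. The linear cascade $f_{12}\to s_2\to f_{21}$ (with $q_2\equiv 0$) then gives
\[
a_1(t)=f_{21}(t)=0.1\left(1-e^{-t}\left(1+t+\tfrac{t^2}{2}\right)\right)<\mu_s\kappa_1,
\qquad
a_2(t)=f_{12}(t)=0.1\left(1-e^{-t}\right)<\mu_s\kappa_2 .
\]
So the hypothesis holds for every $i$ and every $t\ge 0$. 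Yet $\dot q_1=-0.1\,e^{-t}(1+t+t^2/2)$ has total integral $-0.3$, so $q_1(t)\downarrow 0.6$ and the queue never drains. A uniform gap such as your $\sup_t a_i(t)\le \mu_s\kappa_i-\epsilon$ is therefore genuinely needed.

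Two of your side remarks need correcting:
\begin{itemize}
\item Without a gap, monotone convergence is only to some limit, which may be strictly positive as in this example. The failure is therefore not merely finite-time versus asymptotic draining.
\item Boundedness of $a_i$ via mass conservation gives no gap. The relevant condition is $\limsup_{t\to\infty}a_i(t)<\mu_s\kappa_i$. Together with continuity of $a_i$ and the pointwise strict inequality on compact intervals, it yields $\sup_{t\ge 0}a_i(t)<\mu_s\kappa_i$.
\end{itemize}
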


The proof is given in Appendix~\ref{app:queue_free}. This condition ensures that limiting our arrival rate is sufficient to process all incoming flow, preventing queue formation.

\subsection{Positive Invariance of Queue-Free Condition}

The goal of this subsection is to prove the positive invariance our of queue-free state space. We prove that if $ a_i (0) < \mu_s \kappa_i \Rightarrow a_i(t) < \mu_s \kappa_i,\, \forall t \geq 0 $.
%
%
%
To that end, we define the safe set as $\mathcal{X} = \{x \ge 0 : \sum x = 1, s_i \le \kappa_i, a_i < \mu_s \kappa_i\}$.  Next, we show that this set is positively invariant under the network dynamics as captured by the following result whose proof is given in Appendix~\ref{app:pos_inv}.

\begin{proposition}
\label{prop:pos_inv}
Consider the set:
\begin{equation}
\mathcal X
=
\left\{
x \ge 0 :
\mathbf 1^\top x = 1,\;
s_i \le \kappa_i,\;
a_i < \mu_s \kappa_i,\;
\forall i\in\mathcal V
\right\}
\label{eq: Invariant Set}
\end{equation}
where:
\[
s_i(t) := \sum_d s_i^{(d)}(t),
\qquad
a_i(t) := \sum_d \sum_{k\neq i} \mu_{ki} f_{ki}^{(d)}(t).
\]
Then $\mathcal X$ is positively invariant under the destination-conditioned dynamics:
\begin{align}
\dot{s}_i^{(d)}(t)
&=
\sum_{k\neq i}\mu_{ki} f_{ki}^{(d)}(t)-\mu_s s_i^{(d)}(t), \label{eq:pos_dyn_1}\\ 
\dot{f}_{ij}^{(d)}(t)
&=
\mu_s p_{ij}^{(d)} s_i^{(d)}(t)-\mu_{ij} f_{ij}^{(d)}(t),
\qquad i\neq j. \label{eq:pos_dyn_2}
\end{align}
\end{proposition}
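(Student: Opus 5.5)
The plan is to verify the four defining conditions of $\mathcal X$ one at a time along solutions of \eqref{eq:pos_dyn_1}--\eqref{eq:pos_dyn_2}. On $\mathcal X$ every queue is zero, so by Proposition~\ref{prop:queue_free} the service start rate reduces to $r_i^{(d)}=a_i^{(d)}$. The dynamics are then linear, $\dot x = Ax$, and the right-hand side is globally Lipschitz, so solutions exist and are unique. This lets me use a Nagumo-type tangency argument on each face of $\mathcal X$ and, when needed, explicit variation-of-constants formulas.

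\textbf{Step 1 (nonnegativity).} Every off-diagonal entry of $A$ is nonnegative, since the coefficients $\mu_{ki}$ and $\mu_s p_{ij}^{(d)}$ are nonnegative. So $A$ is Metzler, $e^{At}\ge 0$ entrywise, and the orthant $x\ge 0$ is invariant. Equivalently, on the face $s_i^{(d)}=0$ we have $\dot s_i^{(d)}\ge 0$, and on the face $f_{ij}^{(d)}=0$ we have $\dot f_{ij}^{(d)}\ge 0$.

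\textbf{Step 2 (mass conservation).} I sum $\dot s_i^{(d)}$ over $i$ and $\dot f_{ij}^{(d)}$ over $i\neq j$. Using $\sum_j p_{ij}^{(d)}=1$ and $p_{ii}^{(d)}=0$, the service outflow $\mu_s s_i^{(d)}$ cancels against the flight inflow $\sum_j \mu_s p_{ij}^{(d)} s_i^{(d)}$. The flight outflow $\mu_{ij}f_{ij}^{(d)}$ cancels against the arrival term in $\dot s_j^{(d)}$. Hence $\tfrac{d}{dt}\mathbf 1^\top x=0$.

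\textbf{Step 3 (pad capacity).} I sum \eqref{eq:pos_dyn_1} over $d$ to get $\dot s_i = a_i-\mu_s s_i$. On the face $s_i=\kappa_i$ this gives $\dot s_i = a_i-\mu_s\kappa_i<0$, provided the last condition $a_i<\mu_s\kappa_i$ still holds. So Step 3 depends on Step 4, and I will run a first-exit-time argument on all constraints simultaneously. Let $t^*$ be the first time any constraint becomes active, and show that none can become active at $t^*$.

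\textbf{Step 4 (arrival bound) — the main obstacle.} Differentiating,
\[
\dot a_i=\sum_{k\neq i}\mu_{ki}\bigl(\mu_s p_{ki}s_k-\mu_{ki}f_{ki}\bigr),
\]
where $p_{ki}s_k:=\sum_d p_{ki}^{(d)} s_k^{(d)}$. The sign of this expression is not controlled pointwise on the face $a_i=\mu_s\kappa_i$: slow links carrying little mass can receive large launch flow. So a pure tangency argument will likely fail.

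Instead, I would use variation of constants on each flight variable:
\[
f_{ki}^{(d)}(t)=e^{-\mu_{ki}t}f_{ki}^{(d)}(0)+\int_0^t e^{-\mu_{ki}(t-\tau)}\mu_s p_{ki}^{(d)}s_k^{(d)}(\tau)\,d\tau .
\]
With $s_k\le\kappa_k$ on $[0,t^*)$, this gives
\[
\mu_{ki}f_{ki}(t)\le \max\bigl\{\mu_{ki}f_{ki}(0),\ \mu_s\bar p_{ki}\kappa_k\bigr\},
\]
where $\bar p_{ki}=\max_d p_{ki}^{(d)}$ (or a suitably mass-weighted version). Summing over $k$ bounds $a_i(t)$ by a convex-type combination of its initial value and the launch capacity into $i$. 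The claim then follows once the routing satisfies the capacity-balance condition $\sum_{k\neq i}\bar p_{ki}\kappa_k<\kappa_i$, which I expect to be the implicit underloaded hypothesis behind the statement.

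If this bound turns out too loose, my fallback is a weighted Lyapunov function $V_i=a_i+\sum_k c_{ki}f_{ki}$, chosen so that $\dot V_i\le \mu_s(\text{inflow capacity})-\min_k\mu_{ki}\,V_i$. This again yields invariance under the same capacity-balance inequality, which I would state explicitly as the standing assumption.
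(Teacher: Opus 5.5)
\textbf{There is a genuine gap in Step 4, and it cannot be closed for the statement as written.}

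Your Steps 1--3 are the paper's parts (i)--(iii), and the paper's proof ends there. On the face $s_i=\kappa_i$ it uses $a_i<\mu_s\kappa_i$ ``by the definition of $\mathcal X$'' and never examines the face $a_i=\mu_s\kappa_i$. You are right that this is the crux and that tangency cannot settle it. In fact the proposition is false as stated.

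Take one class, $V=3$, $\mu_s=\mu_{ij}=1$, $\kappa_i=0.3$, $p_{13}=p_{23}=1$, $p_{31}=p_{32}=\tfrac12$. Start from $s_1=s_2=0.3$, $s_3=0$, $f_{13}=f_{23}=0.075$, $f_{31}=f_{32}=0.125$, with all other flights zero. This point has unit mass and $a=(0.125,0.125,0.15)$, so it lies in $\mathcal X$. The equilibrium $s^*=(0.125,0.125,0.25)$ also lies in $\mathcal X$. The aggregates $s_1+s_2\to a_3\to s_3\to f_{31}+f_{32}\to s_1+s_2$ form a unit-rate cycle, which gives
$a_3(t)=e^{-t}(0.2\cosh t+0.3\sinh t-0.05\cos t+0.3\sin t)$.
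Hence $a_3(1)\approx 0.326>\mu_s\kappa_3$, and the trajectory leaves $\mathcal X$. So an extra hypothesis or a smaller set is unavoidable. Your instinct to add one is correct, but the repair you propose does not work.

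\emph{First failure: the summation step.} Summing your per-link bounds gives a convex combination of $a_i(0)$ and the launch capacity only if all rates $\mu_{ki}$ into $i$ coincide. The weights $e^{-\mu_{ki}t}$ are link-specific, so all you actually get is $a_i(t)\le\sum_k\max\{\mu_{ki}f_{ki}(0),\mu_s\bar p_{ki}\kappa_k\}$. This exceeds $\mu_s\kappa_i$ when a slow link carries most of $a_i(0)$ while a fast link from a nearly full pad fills almost at once.

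\emph{Second failure, which is decisive: your hypothesis is never satisfiable.} Because $p_{kk}^{(d)}=0$ and rows sum to one, $\sum_{i\neq k}\bar p_{ki}\ge 1$. Summing $\sum_{k\neq i}\bar p_{ki}\kappa_k<\kappa_i$ over all $i$ then yields $\sum_k\kappa_k<\sum_k\kappa_k$. The obstruction is conservation. Any launch bound that only uses $s_k\le\kappa_k$ must allow a total launch of $\mu_s\kappa_k$ out of each $k$, and these totals cannot all fit strictly under every $\mu_s\kappa_i$. Your mass-weighted variant and your Lyapunov fallback rest on the same inequality, so they hit the same count. The fallback has a further problem: it constrains $V_i(0)\ge a_i(0)$, which describes a smaller initial set, not $\mathcal X$.

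\emph{What does work.} Bound $s^{(d)}$ by a stationary profile rather than by $\kappa$, and put per-link flight bounds into the set itself. The classes evolve independently and linearly. Choose $\lambda_d>0$ with $\sum_d\lambda_d\pi_i^{(d)}<\kappa_i$, and consider the box $0\le s_k^{(d)}\le\lambda_d\pi_k^{(d)}$, $0\le\mu_{ki}f_{ki}^{(d)}\le\mu_s p_{ki}^{(d)}\lambda_d\pi_k^{(d)}$. On its upper faces, $\dot s_k^{(d)}\le\mu_s\lambda_d((\pi^{(d)}P^{(d)})_k-\pi_k^{(d)})=0$ and $\dot f_{ki}^{(d)}\le 0$. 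So the box is positively invariant for arbitrary $\mu_{ki}$. Intersected with the invariant mass constraint, it lies in $\mathcal X$, because $s_i\le\sum_d\lambda_d\pi_i^{(d)}$ and $a_i\le\mu_s\sum_d\lambda_d\pi_i^{(d)}$. Taking $\lambda_d$ slightly above the paper's $\alpha_d$, which the strict steady-state capacity constraint allows, gives an invariant subset of $\mathcal X$ that contains the equilibrium. That is the form of the result that can actually be proved.
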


\section{Refined Optimization Problem}
\label{sec:Refined Optimization Problem}

This section shows several simplifications of the original optimization problem. First, we show how to simplify the objective function in~\eqref{eq:obj_function} to a simpler form. Next, we will exploit the positive invariance of the queue-free state space proof from Proposition 2 to remove the infinite-dimensional differential equations and simplify it to finite-dimensional algebraic constraints.

\subsection{Objective Function Derivation}

This subsection derives our optimization problem that determines an optimal routing policy $P = \{P^{(1)}, P^{(2)}, \dots, P^{(d)}\}$ that jointly optimizes for travel time,
multi-hop efficiency, unmet demand, and vertiport congestion. First, we derive the long-run average cost per jump for an ergodic chain with destination-conditioned transition matrix $P^{(d)}$. Then, we define a cost function that penalizes travel time and unnecessary hops. We substitute this cost function into our expected one-jump cost to derive our objective function. Finally, we derive our constraints that ensure routing integrity, steady-state flow, mass conservation, and safe capacity.  

\begin{proposition}
    \label{prop:obj_func_derivation}
    Minimizing the objective function $J$ in~\eqref{eq:obj_function} is equivalent to minimizing:
    \begin{align}
        J =\sum_{d,i,j} m_d\,\pi_i^{(d)} p_{ij}^{(d)} \tau_{ij}-h\sum_d m_d \pi_d^{(d)}.
        \label{eq:obj_function2}
    \end{align}
\end{proposition}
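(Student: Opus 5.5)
We plan to apply the ergodic theorem for Markov chains to each destination class separately, compute the long-run average hop cost in closed form, and then sum over classes with weights $m_d$. Fix $d$ and let $\pi^{(d)}$ be the unique stationary distribution of the ergodic chain $Y_n^{(d)}$ with transition matrix $P^{(d)}$, so $\pi^{(d)\top}P^{(d)}=\pi^{(d)\top}$ and $\mathbf 1^\top\pi^{(d)}=1$.

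The first step is to pass from the chain to its edge chain. The pair process $Z_n=(Y_n^{(d)},Y_{n+1}^{(d)})$ is itself a Markov chain on the edge set. It has stationary law $\pi_i^{(d)}p_{ij}^{(d)}$. Since the cost $\beta_{ij}^{(d)}$ is a bounded function of $Z_n$, the Ces\`aro averages of the expectations satisfy
\[
\lim_{N\to\infty}\frac1N\,\mathbb E\!\left[C_N^{(d)}\right]=\sum_{i,j}\pi_i^{(d)}p_{ij}^{(d)}\beta_{ij}^{(d)} .
\]
This holds for any initial distribution. It follows because $\frac1N\sum_{n<N}\mathbb P(Y_n^{(d)}=i)\to\pi_i^{(d)}$ for an irreducible finite chain, and periodicity does not matter for Ces\`aro limits. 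We then have $\mathbb E[\beta_{Y_n,Y_{n+1}}]=\sum_{i,j}\mathbb P(Y_n=i)p_{ij}^{(d)}\beta_{ij}^{(d)}$. This step carries the actual probabilistic content. Its main subtlety is being explicit that ergodicity here means irreducibility, so that Ces\`aro convergence holds even without aperiodicity.

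The second step substitutes $\beta_{ij}^{(d)}=\tau_{ij}+h\mathbf 1_{[j\neq d]}$, which gives
\[
J^{(d)}=\sum_{i,j}\pi_i^{(d)}p_{ij}^{(d)}\tau_{ij}+h\sum_i\pi_i^{(d)}\sum_{j\neq d}p_{ij}^{(d)} .
\]
The inner sum equals $1-p_{id}^{(d)}$ by row-stochasticity. Stationarity gives $\sum_i\pi_i^{(d)}p_{id}^{(d)}=\pi_d^{(d)}$. So the second term collapses to $h\bigl(1-\pi_d^{(d)}\bigr)$.

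Finally, we multiply by $m_d$ and sum over $d$. The result is
\[
J=\sum_{d,i,j}m_d\pi_i^{(d)}p_{ij}^{(d)}\tau_{ij}-h\sum_d m_d\pi_d^{(d)}+h\sum_d m_d .
\]
The last term is a constant independent of the routing matrices, since the masses $m_d$ are fixed by mass conservation. It therefore does not affect the minimizer, which yields the equivalence with~\eqref{eq:obj_function2}.
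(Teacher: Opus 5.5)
Your proposal is correct and follows the paper's proof essentially step for step. Both reduce $J^{(d)}$ to $\sum_{i,j}\pi_i^{(d)}p_{ij}^{(d)}\beta_{ij}^{(d)}$, use row-stochasticity and stationarity to turn the hop-penalty term into $h(1-\pi_d^{(d)})$, and drop the policy-independent constant. Your Ces\`aro argument is a bit more careful than the paper's, which simply asserts $\mathbb{P}(Y_n^{(d)}=i)=\pi_i^{(d)}$ ``under stationarity''. Keeping the constant as $h\sum_d m_d$ also means you do not need the normalization $\sum_d m_d=1$ that the paper invokes.
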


This proof is given in Appendix~\ref{app:obj_func_derivation}.

\subsection{Algebraic Constraints Derivation}
The utility of the positive invariance result (Proposition 2) lies in its ability to decouple the long-run optimal routing problem from the complexities of transient state trajectories. we guarantee that any system initialized within this region will never exit it under the prescribed dynamics. This mathematical assurance allows us to replace the infinite-dimensional, time-varying differential constraints of the original optimal control problem with a set of static, algebraic steady-state equations. Consequently, we can ensure network-wide safety and stability across the entire time horizon solely by satisfying underloaded capacity conditions at equilibrium, effectively eliminating the need for computationally intensive numerical integration or explicit monitoring of transient behavior.



\paragraph{Steady-State Analysis}

We analyze the network at its equilibrium point $x^* = (q^*, s^*, f^*)$, defined by $\dot{q}=\dot{s}=\dot{f}=0$. Doing so introduces three important cases: Under loaded (Stable) case, Critical case, and the Overloaded case. We are interested in maintaining the under loaded case.

\textbf{Underloaded Case}
Setting $\dot{q}_j^{(d)} = 0$ implies $a_j^{(d)*} = r_j^{(d)*}$, and $\dot{s}_j^{(d)} = 0$ yields $r_j^{(d)*} = \mu_s s_j^{(d)*}$. Hence, at steady state:
\begin{align}
s_j^{(d)*} = \frac{a_j^{(d)*}}{\mu_s}. \label{eq: s_equil}
\end{align}

If we impose $ a_j^{(d)*} < \mu_s c $, then by \eqref{eq: s_equil} we have $s_j^{(d)*} < c$. This condition, in turn ensures that the pads are under capacity $c$ and hence it naturally follows that all queues will be empty at steady state, i.e., $s_j^{(d)*} < c \Rightarrow q_j^{(d)*} = 0 $. 
Similarly, from $\dot{f}_{ij}^{(d)} = 0$, we conclude that:
\begin{equation}
    f_{ij}^{(d)*} = \frac{\mu_s p_{ij}^{(d)} s_i^{(d)*}}{\mu_{ij}}. \label{eq: f_equil}
\end{equation}
Substituting $s_j^{(d)*}$ from \eqref{eq: s_equil} yields: 
\begin{align*}
f_{ij}^{(d)*} = \frac{p_{ij}a_i^{(d)*}}{\mu_{ij}}.
\end{align*}
Next, we substitute the above equation for $f_{ij}^{(d)*}$ into $ a_j^{(d)}(t) = \sum_i \sum_{j \neq i} \mu_{ij} f_{ij}^{(d)}(t) $ to get:
\begin{align}
a_j^{(d)*} &= \sum_{i \neq j} \mu_{ij} f_{ij}^{(d)*} = \sum_{i \neq j} \mu_{ij} \left( \frac{p_{ij}a_i^{(d)*}}{\mu_{ij}} \right) \\
&= \sum_{i\neq j} p_{ij} a_i^{(d)*} \label{eq: a_j = p_ij a_i}.
\end{align}
Finally, equation \eqref{eq: a_j = p_ij a_i} simplifies to: 
\begin{equation}
a^{(d)*} = a^{(d)*} P^{(d)}. \label{eqref: a=aP}
\end{equation}
In other words, at steady state, the arrival rate $a^{(d)*}$ is equal to the left eigenvector of the routing matrix $P^{(d)}$. We are now ready to derive our constraints

\paragraph{Eliminating flight variables}

We derive steady-state constraints that eliminate the flight variables \(f_{ij}^{(d)}\) in favor of the service variables \(s_i^{(d)}\). 
Note for readability, we use $ f_{ij}^{(d)} = f_{ij}^{(d)*} $ and $ s_{ij}^{(d)} = s_{ij}^{(d)*} $.

At equilibrium, the flight dynamics satisfy Equation \eqref{eq: f_equil}:
which implies:
\[
f_{ij}^{(d)}
=
\frac{\mu_s}{\mu_{ij}} p_{ij}^{(d)} s_i^{(d)}.
\]
Thus, all flight flows can be expressed as linear functions of \(s\) and \(P^{(d)}\).

\paragraph{Steady-state flow balance}
The arrival rate into node \(i\) for class \(d\) is
\[
a_i^{(d)}
=
\sum_{k\neq i} \mu_{ki} f_{ki}^{(d)}.
\]
Substituting the steady-state expression for \(f_{ki}^{(d)}\):
\[
a_i^{(d)}
=
\sum_{k\neq i} \mu_{ki}
\left(
\frac{\mu_s}{\mu_{ki}} p_{ki}^{(d)} s_k^{(d)}
\right)
=
\mu_s \sum_{k\neq i} p_{ki}^{(d)} s_k^{(d)}.
\]
At steady state, \(a_i^{(d)} = \mu_s s_i^{(d)}\), hence:
\[
\sum_{k} p_{ki}^{(d)} s_k^{(d)} = s_i^{(d)}.
\]
This yields the flow conservation constraint:
\begin{equation}
\sum_{j} p_{ji}^{(d)} s_j^{(d)} - s_i^{(d)} = 0.
\label{eq: Flow Conservation Constraint}
\end{equation}

In vector form:
\begin{equation}
s^{(d)} = s^{(d)} P^{(d)}.
\end{equation}

Since \(P^{(d)}\) is irreducible, there exists a unique stationary distribution:
\[
\pi^{(d)} = \pi^{(d)} P^{(d)},
\qquad
\pi^{(d)} \ge 0,
\qquad
\sum_i \pi_i^{(d)} = 1.
\]
Thus, any steady-state solution must take the form:
\begin{equation}
s_i^{(d)} = \alpha_d \, \pi_i^{(d)},
\label{eq: s_pi_param}
\end{equation}
where \(\alpha_d = \sum_i s_i^{(d)} > 0\) is a scalar determined by mass conservation.

\paragraph{Mass conservation}
Using the steady-state expression for \(f_{ij}^{(d)}\), the total class mass is:
\[
m_d
=
\sum_i s_i^{(d)}
+
\sum_{i\neq j} f_{ij}^{(d)}
=
\sum_i s_i^{(d)}
+
\sum_i \sum_{j\neq i}
\frac{\mu_s}{\mu_{ij}} p_{ij}^{(d)} s_i^{(d)}.
\]
Substituting \eqref{eq: s_pi_param}, we obtain:
\[
m_d
=
\alpha_d
\left[
\sum_i \pi_i^{(d)}
+
\sum_i \pi_i^{(d)} \sum_{j\neq i}
\frac{\mu_s}{\mu_{ij}} p_{ij}^{(d)}
\right].
\]
Using \(\sum_i \pi_i^{(d)} = 1\), this simplifies to:
\begin{equation}
m_d
=
\alpha_d
\left(
1 +
\sum_i \pi_i^{(d)} \sum_{j\neq i}
\frac{\mu_s}{\mu_{ij}} p_{ij}^{(d)}
\right).
\end{equation}
Solving for \(\alpha_d\), we obtain:
\begin{equation}
\alpha_d
=
\frac{m_d}{
1 +
\sum_i \pi_i^{(d)} \sum_{j\neq i}
\frac{\mu_s}{\mu_{ij}} p_{ij}^{(d)}
}.
\label{eq: alpha_d Definition}
\end{equation}

\paragraph{Combined characterization}
Combining flow conservation and mass conservation, the steady-state solution is fully characterized by:
\begin{equation}
s_i^{(d)}
=
\alpha_d \, \pi_i^{(d)},
\qquad
\pi^{(d)} = \pi^{(d)} P^{(d)},
\end{equation}
with \(\alpha_d\) given by \eqref{eq: alpha_d Definition}.
In this combined characterization, the steady-state constraint contributes the shape $ \pi^{(d)} $ and mass conservation contributes to the scale $ \alpha_d $.

\paragraph{Capacity constraints}
The aggregate service load at vertiport \(i\) is:
\[
s_i = \sum_d s_i^{(d)}.
\]
To ensure queue-free operation, we require:
\begin{equation}
\label{eq: service_constraint_1}
\sum_d s_i^{(d)} < \kappa_i.
\end{equation}
Similarly, the total incoming service-induced flow satisfies:
\begin{equation}
\label{eq: service_constraint_2}
\sum_{d,k} p_{ki}^{(d)} s_k^{(d)} < \kappa_i,
\end{equation}
which ensures that the induced arrival rate does not exceed service capacity.

\paragraph{Routing constraints}

Recall the definition of routing matrix $P^{(d)} = [p_{ij}^{(d)}]$ from definition \eqref{eq:Routing Matrix Definition}:

\begin{equation}
p_{ij}^{(d)} \geq 0, \quad
\sum_j p_{ij}^{(d)} = 1,\, \forall i, \quad
p_{ii}^{(d)} = 0,\, \forall i
\end{equation}

\begin{equation}
p_{ij}^{(d)} \geq \epsilon \cdot A_{ij}, \quad \forall i, j \in \mathcal{V}\label{eq:irreducibility_constraint}
\end{equation}
where $A_{ij} \in \{0, 1\}$ is the $(i,j)$-th entry of the adjacency matrix $A$ corresponding to the physical network graph $\mathcal{G}$. Specifically, $A_{ij} = 1$ if a valid flight link exists from vertiport $i$ to $j$, and $A_{ij} = 0$ otherwise. We assume the underlying physical network topology is strongly connected, meaning there exists a valid directed flight path between any pair of nodes in $\mathcal{V}$. Note that $A$ is independent of the destination class $d$, reflecting the shared physical infrastructure.

The parameter $\epsilon > 0$ is a strictly positive, arbitrarily small tolerance. By lower-bounding the transition probabilities along all valid physical edges, constraint \eqref{eq:irreducibility_constraint} guarantees that the destination-conditioned routing matrix $P^{(d)}$ characterizes an irreducible Markov chain. This irreducibility is critical, as it ensures the existence of a unique stationary distribution for the network flows.

\subsection{Final Optimization Problem}
At steady state, all constraints can be expressed solely in terms of \(s_i^{(d)}\) and \(p_{ij}^{(d)}\). The flight variables \(f_{ij}^{(d)}\) are completely eliminated via the equilibrium relation, reducing the problem to the following finite-dimensional algebraic system.


\begin{align*}
\min_{P^{(d)}, s} \quad & \sum_{d,i,j} m_d \pi_i^{(d)} p_{ij}^{(d)} \tau_{ij} - h \sum_{d} m_d \pi_d^{(d)}, \
\text{s.t.} \\
\quad & \sum_{j} p_{ij}^{(d)} = 1, \quad p_{ij}^{(d)} \ge 0, \quad p_{ii}^{(d)} = 0, \\ &P_{ij}^{(d)} \geq \epsilon \cdot A_{ij},  \quad \forall i, j \in V  \quad (\text{Routing Integrity})  \\
& s^{(d)} = \alpha_d \pi^{(d)}, \quad \\ &\quad \; \qquad (\text{Steady-State Flow and Mass Conservation})  \\
& \sum_d s_i^{(d)} < \kappa_i, \quad \sum_{d,k} p_{ki}^{(d)} s_k^{(d)} < \kappa_i \quad (\text{Safe Capacity}) 
\end{align*}

This static formulation eliminates the need for time-dependent integration, ensuring global safety through steady-state algebraic constraints.

Substituting $\tau_s := 1/\mu_s$ and $\tau_{ij} := 1/\mu_{ij}$, this expression simplifies to
\begin{equation}
\alpha_d = m_d \, \frac{\tau_s}{\tau_s + \mathbb{E}[\tau_{ij}]},
\label{eq:alpha_d Simplified}
\end{equation}
where $\mathbb{E}[\tau_{ij}] := \sum_{i,j} \pi_i^{(d)} p_{ij}^{(d)} \tau_{ij}$ is the mean one-hop flight time under the stationary routing behavior of class $d$, consistent with the one-jump expectation used in Proposition~\ref{prop:obj_func_derivation}. Intuitively, $\alpha_d$ is the fraction of the total mean cycle time $\tau_s + \mathbb{E}[\tau_{ij}]$ spent in service, scaled by the total class mass $m_d$: when UAVs spend proportionally more time in flight relative to service ($\mathbb{E}[\tau_{ij}] \gg \tau_s$), $\alpha_d$ shrinks, reflecting that a smaller share of the fleet is occupied at service pads at any given time, which loosens the service-capacity constraints~\eqref{eq: service_constraint_1} and~\eqref{eq: service_constraint_2} and generates larger capacity $\kappa$.

\section{Numerical Examples}

We present five numerical experiments validating the theoretical properties 
of the routing optimization. First, we examine the effect of the hop penalty 
$h$ on both a symmetric network (uniform edge rates) and a heterogeneous 
network (varied $\mu_{ij}$), demonstrating that $h$ shifts the objective 
value without altering the optimal policy in the symmetric case, while 
inducing a meaningful Pareto trade-off between travel efficiency and 
destination reachability in the heterogeneous case. Next, we sweep pad 
capacity $\kappa_i$ on a $V = 3$ vertiport heterogeneous network, identifying 
a hard feasibility floor below which no valid routing policy exists and a 
saturation threshold $\kappa^*$ above which the capacity constraint becomes 
non-binding. Additionally, we observe the effect of the parameters $\mu_s$ and $\mu_{ij}$ on the capacity $\kappa$. Finally, we compare our method with two baseline methods and compare the costs and minimum capacity $\kappa_{\min}$.

\subsection{Effect of hop penalty $h$}

We examine how the hop penalty $h$ shapes the optimal routing policy across
symmetric and heterogeneous networks, demonstrating that $h$ controls the
trade-off between minimizing travel time and maximizing destination
reachability.

\paragraph{Symmetric network ($\mu_{ij} = 1$ for all $i \neq j$).}
As established analytically, the weighted travel time term
$\sum_{d,i,j} m_d \pi_i^{(d)} P_{ij}^{(d)} \tau_{ij}$ is identically equal
to 1 for all $h$, since uniform edge rates imply $\tau_{ij} = 1$ for all
$i \neq j$ and the weighted average collapses to a constant regardless of
the routing policy.
Consequently, the total objective decreases linearly with $h$ at rate
$-\sum_d m_d \pi_d^{(d)} \approx -1/2$, and the reachability term grows
linearly at the same rate as $\pi_d^{(d)}$ saturates at its structural
upper bound of $1/2$ immediately at $h = 0^+$.
The optimal routing policy is thus determined entirely by the reachability
term and does not change with $h$ --- only the objective value shifts,
confirming that $h$ has no effect on the optimal $P^{(d)}$ once
saturation is achieved.

\begin{figure}[htbp]
    \centering
    \includegraphics[width=\columnwidth]{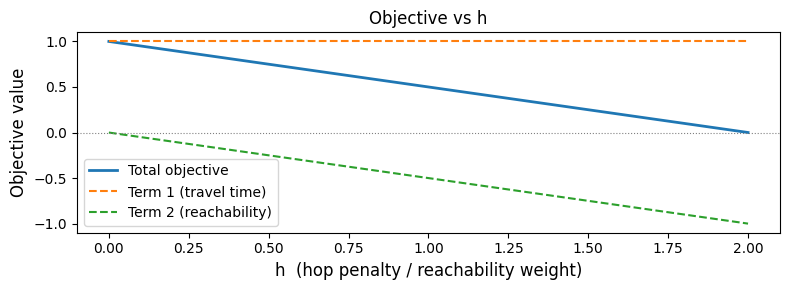}
    \caption{Objective function vs. h [Symmetric network]}
    \label{fig:symmetric_h}
\end{figure}

\paragraph{Heterogeneous network ($\mu_{ij} \in [0.7,\, 1.25]$).}
In contrast to the symmetric case, the travel time term is no longer
constant: it begins at approximately $0.91$ at $h = 0$ and rises to
$\approx 1.0$ as $h$ increases, reflecting a trade-off between
routing efficiency and destination reachability.
At $h = 0$, the optimizer freely exploits the fastest edges without
regard for reaching destinations, achieving a lower travel time cost;
as $h$ grows, the routing policy is pulled toward destination-directed
paths which are not always the shortest, causing the travel time term
to increase.
The reachability term again grows approximately linearly once
$\pi_d^{(d)}$ saturates at $1/2$, but the saturation occurs more
gradually than in the symmetric case, which is visible as a slight inflection
in the travel time term near $h \approx 1$. This confirms that edge
heterogeneity induces a meaningful Pareto frontier between travel
efficiency and destination reachability that is absent in the symmetric
network.

\begin{figure}[htbp]
    \centering
    \includegraphics[width=\columnwidth]{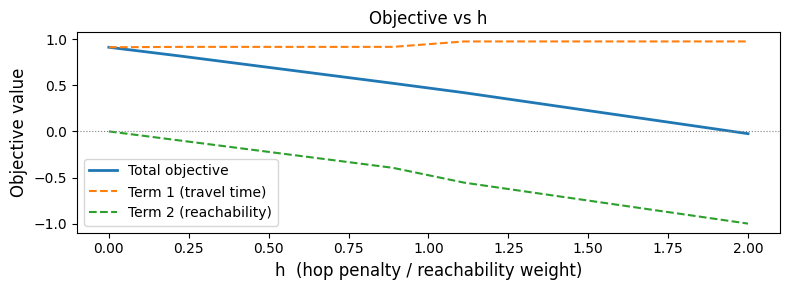}
    \caption{Objective function vs. h [Heterogeneous network]}
    \label{fig:unsymmetric_h}
\end{figure}


\subsection{Effect of Capacity $\kappa$ on Optimal Routing}

\label{subsec: Num Results B}

We investigate how pad capacity $\kappa_i$ constrains the feasible routing
space, identifying a hard feasibility floor below which no valid policy
exists and a saturation threshold $\kappa^*$ beyond which further capacity
increases yield no improvement in the optimal objective.

\begin{figure}[htbp]
    \centering
    \includegraphics[width=\columnwidth]{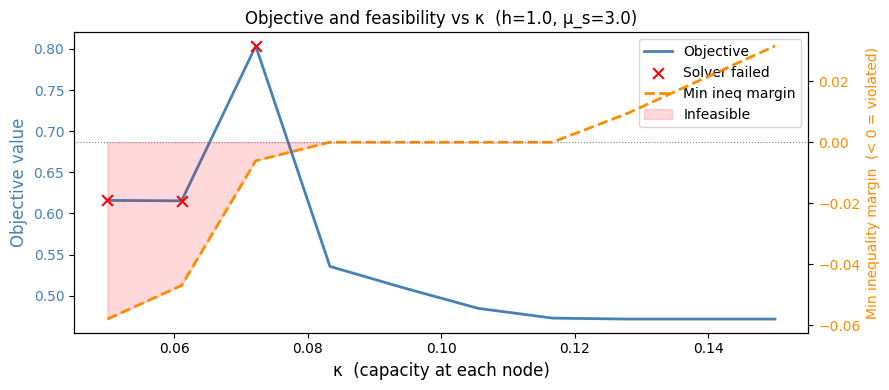}
    \caption{Objective value and Feasibility vs. Kappa values}
    \label{fig:kappa_sweep}
\end{figure}

We consider a $V = 3$ vertiport network with heterogeneous flight rates
$\mu_{ij} \in [0.7, 1.25]$, uniform service rate $\mu_s = 3.0$, hop penalty $h = 1.0$,
and uniform class masses $m_d = 1/3$.
Solving the routing optimization across $\kappa \in [0.05, 0.15]$
reveals a threshold effect: below a hard feasibility floor
$\kappa_{\min} \approx 0.04$, no routing policy satisfies the capacity
constraint and the solver fails; in the transition region
$\kappa \in [0.08, 0.12]$, the binding constraint forces suboptimal
routing and the objective degrades sharply from $0.536$ to $0.472$;
above the saturation threshold $\kappa^* \approx 0.12$, the constraint
becomes non-binding and the objective plateaus at $0.472$ regardless of
further capacity increases.
This identifies $\kappa^*$ as the minimum pad capacity required to
achieve unconstrained-optimal routing, providing a principled
infrastructure design target for the network.



\subsection{Effect of $\mu_s$ and $\mu_{ij}$ on $\kappa$}


Equation~\eqref{eq:alpha_d Simplified} demonstrates how the parameters $\mu_s$ and $\mu_{ij}$ have an effect on the safe capacity margins. To corroborate this, we used a similar setup to Section~\ref{subsec: Num Results B}, except we varied $\mu_s$ to change the value of $\frac{\mu_s}{\mu_{ij}}$, and we observed its effect on $\kappa_{\min}$. When $\frac{\mu_s}{\mu_{ij}}$ values were $2.0$, $3.0$, and $4.0$, we observed $\kappa_{\min}$ to be $0.106$, $0.083$, and $0.061$ respectively.

From the results, it is clear that as we increase $\frac{\mu_s}{\mu_{ij}}$ ratio, the $\kappa_{\min}$ values decrease. This is because when service rate is high relative to flight rates, then a smaller share of the fleet is occupying vertiports. Thus, we can have a larger capacity margin.

\subsection{Comparison with baseline methods}

We compare our optimization technique to two baseline methods. First, we compare it to a routing policy in which each vertiport distributes outgoing flow uniformly accross all valid links: $p_{ij}^{(d)} = 1/|\mathcal{N}(i)|$ for all $j \in \mathcal{N}(i)$ denotes the set of neighbors reachable from $i$ under the network topology described by $A$.

Second, we consider a greedy routing policy in which each vertiport routes all outgoing flow to the single neighbor minimizing travel time toward the destination class: $p_{ij}^{(d)} = 1$ for $j = \arg \min_{k \in \mathcal{N}(i)} \tau_{ik}$ and $p_{ij}^{(d)} = 0$ otherwise. 

We consider a $V = 3$ vertiport network with uniform flight rates $\mu_{ij} = 1.0$, uniform service rate $\mu_s = 2.0$, hop penalty $h = 1.0$, and uniform class masses $m_d = 1/3$. Solving the optimized routing compared to the two baseline methods highlights the effectiveness of the method: From Table~\ref{table:routing_comparison}, the optimal routing strategy achieves the minimum cost with minimum capacity required to be feasible.

\begin{table}[!t]
\renewcommand{\arraystretch}{1.3}
\caption{Comparison of routing strategies and their cost and minimum capacity}
\label{table:routing_comparison}
\centering
\begin{tabular}{c||c|c|c}
\hline
 & \bfseries Shortest Path & \bfseries Equal Split & \bfseries Ours\\
\hline\hline
$\text{Objective value}~J$ & 0.5607 & 0.6225 & \textbf{0.4477}\\
\hline
$\text{Minimum capacity}~\kappa_{\min}$ & 0.111 & 0.075 & \textbf{0.075}\\
\hline
\end{tabular}
\end{table}

\section{Conclusion}
This paper introduced a tractable, destination-conditioned fluid framework for the optimal routing of UAVs in high-density UAM networks. By employing a mean-field approximation of the underlying CTMC, we derived a dynamical system capable of modeling complex vertiport and flight interactions. The central contribution of this work is the formal proof of the positive invariance of the queue-free state space. This result provides a rigorous justification for simplifying the infinite-dimensional optimal control problem into a tractable, finite-dimensional algebraic optimization. 

The resulting framework ensures network stability and safety-critical operation (i.e., zero queues) across the entire time horizon solely by satisfying underloaded capacity conditions at steady state. Our analysis demonstrates that the refined optimization problem effectively balances travel efficiency with destination-oriented flow. Future research will investigate the robustness of these fluid-based routing policies under stochastic demand fluctuations and integrate these high-level routing strategies with low-level conflict resolution and avoidance maneuvers in decentralized UAM environments.

\bibliographystyle{IEEEtran}
\bibliography{references}

\begin{thebibliography}{10}
\providecommand{\url}[1]{#1}
\csname url@samestyle\endcsname
\providecommand{\newblock}{\relax}
\providecommand{\bibinfo}[2]{#2}
\providecommand{\BIBentrySTDinterwordspacing}{\spaceskip=0pt\relax}
\providecommand{\BIBentryALTinterwordstretchfactor}{4}
\providecommand{\BIBentryALTinterwordspacing}{\spaceskip=\fontdimen2\font plus
\BIBentryALTinterwordstretchfactor\fontdimen3\font minus \fontdimen4\font\relax}
\providecommand{\BIBforeignlanguage}[2]{{%
\expandafter\ifx\csname l@#1\endcsname\relax
\typeout{** WARNING: IEEEtran.bst: No hyphenation pattern has been}%
\typeout{** loaded for the language `#1'. Using the pattern for}%
\typeout{** the default language instead.}%
\else
\language=\csname l@#1\endcsname
\fi
#2}}
\providecommand{\BIBdecl}{\relax}
\BIBdecl

\bibitem{bharadwaj2021decentralized}
S.~Bharadwaj, S.~Carr, N.~Neogi, and U.~Topcu, ``Decentralized control synthesis for air traffic management in urban air mobility,'' \emph{IEEE Transactions on Control of Network Systems}, vol.~8, no.~2, pp. 598--608, 2021.

\bibitem{thibbotuwawa2020unmanned}
A.~Thibbotuwawa, G.~Bocewicz, P.~Nielsen, and Z.~Banaszak, ``Unmanned aerial vehicle routing problems: A literature review,'' \emph{Applied sciences}, vol.~10, no.~13, p. 4504, 2020.

\bibitem{zhang2022method}
H.~Zhang, Y.~Fei, J.~Li, B.~Li, and H.~Liu, ``Method of vertiport capacity assessment based on queuing theory of unmanned aerial vehicles,'' \emph{Sustainability}, vol.~15, no.~1, p. 709, 2022.

\bibitem{han2026rolling}
H.~Han and B.~D. Song, ``Rolling horizon optimization of urban air mobility (uam) service with shared riding, vertiport-airspace capacity, and recharge: a mathematical model and efficient heuristic,'' \emph{Transportation Research Part E: Logistics and Transportation Review}, vol. 206, p. 104548, 2026.

\bibitem{zheng2021optimal}
W.~Zheng, P.~Thangeda, Y.~Savas, and M.~Ornik, ``Optimal routing in stochastic networks with reliability guarantees,'' in \emph{2021 IEEE International Intelligent Transportation Systems Conference (ITSC)}.\hskip 1em plus 0.5em minus 0.4em\relax IEEE, 2021, pp. 3521--3526.

\bibitem{vascik2019aiaa}
P.~D. Vascik and R.~J. Hansman, ``Development of vertiport capacity envelopes and analysis of their sensitivity to topological and operational factors,'' in \emph{AIAA Scitech 2019 Forum}, 2019, p. 0526.

\bibitem{ross2019probability}
S.~M. Ross, \emph{Introduction to Probability Models}, 12th~ed.\hskip 1em plus 0.5em minus 0.4em\relax Academic Press, 2019.

\bibitem{kolmogoroff1931ueber}
A.~Kolmogoroff and A.~Kolmogorov, ``Ueber die analytischen methoden in der wahrscheinlichkeitstheorie,'' \emph{Math. Ann.}, vol. 104, pp. 415--458, 1931.

\bibitem{ethier2009markov}
S.~N. Ethier and T.~G. Kurtz, \emph{Markov processes: characterization and convergence}.\hskip 1em plus 0.5em minus 0.4em\relax John Wiley \& Sons, 2009.

\bibitem{harchol2013performance}
M.~Harchol-Balter, \emph{Performance modeling and design of computer systems: queueing theory in action}.\hskip 1em plus 0.5em minus 0.4em\relax Cambridge University Press, 2013.

\bibitem{kurtz1970solutions}
T.~G. Kurtz, ``Solutions of ordinary differential equations as limits of pure jump markov processes,'' \emph{Journal of applied Probability}, vol.~7, no.~1, pp. 49--58, 1970.

\bibitem{weiss2021scheduling}
G.~Weiss, \emph{Scheduling and Control of Queueing Networks}.\hskip 1em plus 0.5em minus 0.4em\relax Cambridge University Press, 2021, vol.~14.

\bibitem{nasa}
\BIBentryALTinterwordspacing
M.~D. Patterson, D.~R. Isaacson, N.~L. Mendonca, N.~A. Neogi, K.~H. Goodrich, M.~Metcalfe, W.~Bastedo, C.~Metts, B.~P. Hill, D.~DeCarme, C.~Griffin, and S.~Wiggins, ``An initial concept for intermediate-state, passenger-carrying urban air mobility operations,'' in \emph{AIAA SciTech 2021 Forum}, 2021, p. 1626. [Online]. Available: \url{https://nasa.gov}
\BIBentrySTDinterwordspacing

\end{thebibliography}

\appendices
\section{Proof of Proposition~\ref{prop:queue_free}}
\label{app:queue_free}
\begin{proof}
Recall from \eqref{eq: Service Start Rate} that the service start rate is:
\[
r_i(t)=
\begin{cases}
\mu_s \kappa_i, & q_i(t)>0,\\[4pt]
\min\{a_i(t),\mu_s\kappa_i\}, & q_i(t)=0.
\end{cases}
\]
The queue dynamics from \eqref{eq: Dynamical System} are:
\[
\dot q_i(t)=a_i(t)-r_i(t).
\]
If $q_i(t)>0$, then:
\[
\dot q_i(t)=a_i(t)-\mu_s\kappa_i<0,
\]
by \eqref{eq:Q-free-suff-condition}. Hence any positive queue decreases.
If $q_i(t)=0$, then since $a_i(t)<\mu_s\kappa_i$,
\[
r_i(t)=\min\{a_i(t),\mu_s\kappa_i\}=a_i(t),
\]
so:
\[
\dot q_i(t)=a_i(t)-r_i(t)=0.
\]
Thus, once the queue reaches zero, it remains zero.
Therefore, if $q_i(0)=0$, then $q_i(t)=0$ for all $t\ge 0$. More generally, if $q_i(0)>0$, then the queue decreases until it reaches zero, and remains zero thereafter.\qed
\end{proof}

\section{Proof of Proposition~\ref{prop:pos_inv}}
\label{app:pos_inv}
\begin{proof}
We verify that the vector field points inward on each boundary component of $\mathcal X$.

\noindent\emph{(i) Nonnegativity ($x \ge 0$).} We analyze the vector field at the boundary $x = 0$ as follows. If $s_i^{(d)}(t)=0$, then:
\[
\dot{s}_i^{(d)}(t)
=
\sum_{k\neq i}\mu_{ki} f_{ki}^{(d)}(t)\ge 0.
\]
Similarly, if $f_{ij}^{(d)}(t)=0$, then:
\[
\dot{f}_{ij}^{(d)}(t)
=
\mu_s p_{ij}^{(d)} s_i^{(d)}(t)\ge 0.
\]
Hence trajectories cannot leave the nonnegative orthant.

\noindent\emph{(ii) Conservation of mass $(\mathbf 1^\top x(t) = 1)$.}
Define the total mass:
\[
M(t):=\mathbf 1^\top x(t)
=
\sum_d\left(\sum_i s_i^{(d)}(t)+\sum_{i\neq j} f_{ij}^{(d)}(t)\right) = 1.
\]
Summing the dynamics over all $i,j,d$ gives:
\begin{align*}
\frac{d}{dt}M(t) = \frac{d}{dt} \left( \sum_d\left(\sum_i s_i^{(d)}(t)+\sum_{i\neq j} f_{ij}^{(d)}(t)\right) \right) =0.
\end{align*}
where the last equality follows from expanding the terms of $\dot{s}_i^{(d)}$  and $\dot{f}^{(d)}_{ij}$ the fact that the flight terms in~\eqref{eq:pos_dyn_1}-\eqref{eq:pos_dyn_2} cancel by relabeling indices, and using the fact that
$\sum_{j\neq i} p_{ij}^{(d)}=1$ for each $i$.
Hence, if $\mathbf 1^\top x(0)=1$, then $\mathbf 1^\top x(t)=1$ for all $t\ge 0$.

Moreover, conservation holds separately for each destination class $d$. Define:
\[
M^{(d)}(t)
:=
\sum_i s_i^{(d)}(t)
+
\sum_{i\neq j} f_{ij}^{(d)}(t).
\]
Summing the dynamics over $i,j$ for fixed $d$ yields:
\begin{align*}
\frac{d}{dt}M^{(d)}(t)
&=
\sum_i \left(\sum_{k\neq i}\mu_{ki} f_{ki}^{(d)} - \mu_s s_i^{(d)}\right) \\
&\quad+
\sum_i\sum_{j\neq i}\left(\mu_s p_{ij}^{(d)} s_i^{(d)} - \mu_{ij} f_{ij}^{(d)}\right).
\end{align*}
Rearranging terms, the flight terms cancel by relabeling indices, and using
$\sum_{j\neq i} p_{ij}^{(d)}=1$ for each $i$, we obtain:
\[
\frac{d}{dt}M^{(d)}(t)=0.
\]
Thus:
\begin{equation}
M^{(d)}(t)=M^{(d)}(0)=m_d, \qquad \forall t\ge 0, \label{eqref: Mass d Conservation}
\end{equation}
where:
\begin{equation}
m_d
:=
\sum_i s_i^{(d)}(0)
+
\sum_{i\neq j} f_{ij}^{(d)}(0). \label{eqref: Initial Condition}
\end{equation}

Hence the mass of each destination class is conserved individually, and the dynamics depend only on the initial class masses $\{m_d\}$, not on their internal distribution.

\noindent\emph{(iii) Capacity constraint $(s_i \le \kappa_i)$.}
Define the aggregate service load:
\[
s_i(t)=\sum_d s_i^{(d)}(t).
\]
Summing the service equations over $d$ yields:
\[
\dot s_i(t)
=
\sum_d\sum_{k\neq i}\mu_{ki} f_{ki}^{(d)}(t)
-\mu_s \sum_d s_i^{(d)}(t)
=
a_i(t)-\mu_s s_i(t).
\]
If $s_i(t)=\kappa_i$, then by the definition of $\mathcal X$:
\[
\dot s_i(t)
=
a_i(t)-\mu_s \kappa_i
< 0.
\]
Hence, we conclude that the vector field points inward on the boundary $s_i=\kappa_i$, so the service-capacity constraint is preserved.

Combining (i)--(iii), trajectories starting in $\mathcal X$ remain in $\mathcal X$ for all $t\ge 0$. Hence $\mathcal X$ is positively invariant. \qed
\end{proof}

\section{Proof of Proposition~\ref{prop:obj_func_derivation}}
\label{app:obj_func_derivation}
\begin{proof}
Fix a destination class \(d\), and let \(Y_n^{(d)}\) denote the embedded jump chain with transition matrix $P^{(d)}$.
Assume the chain is ergodic, with stationary distribution \(\pi^{(d)}\) satisfying:
\begin{equation}
\pi^{(d)} = \pi^{(d)} P^{(d)},
\qquad
\pi_i^{(d)} \geq 0,
\qquad
\sum_i \pi_i^{(d)} = 1.
\label{eq: pi = pi P }
\end{equation}
Under stationarity, we have:
\[
\mathbb{P} (Y_n^{(d)} = i) = \pi_i^{(d)},
\]
and therefore the joint probability of a jump \(i \to j\) is:
\[
\mathbb{P} (Y_n^{(d)} = i,\; Y_{n+1}^{(d)} = j)
=
\pi_i^{(d)} p_{ij}^{(d)}.
\]
Hence the expected one-jump cost is:
\[
\mathbb{E}\!\left[\beta_{Y_n^{(d)},Y_{n+1}^{(d)}}\right]
=
\sum_{i,j} \pi_i^{(d)} p_{ij}^{(d)} \beta_{ij}.
\]

Substituting in~\eqref{eq:obj_function} yields:
\[
J
=
\sum_d m_d J^{(d)}
=
\sum_d m_d \sum_{i,j} \pi_i^{(d)} p_{ij}^{(d)} \beta_{ij}.
\]


Substituting $
\beta_{ij}^{(d)} := \tau_{ij} + h\,\mathbf{1}[j \neq d],
$
we obtain:
\[
J
=
\sum_d m_d \sum_{i,j} \pi_i^{(d)} p_{ij}^{(d)}
\left(\tau_{ij} + h\,\mathbf{1}[j \neq d]\right).
\]
Expanding gives:
\[
J
=
\sum_{d,i,j} m_d\,\pi_i^{(d)} p_{ij}^{(d)} \tau_{ij}
+
h \sum_{d,i,j} m_d\,\pi_i^{(d)} p_{ij}^{(d)} \mathbf{1}[j \neq d].
\]
Using:
\[
\mathbf{1}_{[j\neq d]} = 1 - \mathbf{1}_{[j=d]},
\]
the second term becomes:
\[
h \sum_d m_d
\left(
\sum_{i,j} \pi_i^{(d)} p_{ij}^{(d)}
-
\sum_i \pi_i^{(d)} p_{id}^{(d)}
\right).
\]
Since each \(P^{(d)}\) is row-stochastic and \(\sum_i \pi_i^{(d)}=1\),
\[
\sum_{i,j} \pi_i^{(d)} p_{ij}^{(d)} = 1.
\]
Moreover, from stationarity \(\pi^{(d)}=\pi^{(d)}P^{(d)}\):
\[
\pi_d^{(d)} = \sum_i \pi_i^{(d)} p_{id}^{(d)}.
\]
Hence:
\[
J
=
\sum_{d,i,j} m_d\,\pi_i^{(d)} p_{ij}^{(d)} \tau_{ij}
+
h\sum_d m_d(1-\pi_d^{(d)}).
\]
Since \(\sum_d m_d=1\), this simplifies to:
\[
J
=
\sum_{d,i,j} m_d\,\pi_i^{(d)} p_{ij}^{(d)} \tau_{ij}
+
h
-
h\sum_d m_d \pi_d^{(d)}.
\]
Therefore, up to the additive constant \(h\), minimizing \(J\) is equivalent to minimizing:
\[
\sum_{d,i,j} m_d\,\pi_i^{(d)} p_{ij}^{(d)} \tau_{ij}
-
h\sum_d m_d \pi_d^{(d)}.
\]

\end{proof}


\end{document}